\newtheorem{theorem}{Theorem}
\title{\LARGE \bf
Optimal first--passage time in gene regulatory networks
}
\author{Khem Raj Ghusinga$^{1}$ and Abhyudai Singh$^{2}$
\thanks{$^{1}$Khem Raj Ghusinga is with the Department of Electrical and Computer Engineering, University of Delaware, Newark, DE, USA 19716.
{\tt\small khem@udel.edu}}%
\thanks{$^{2}$Abhyudai Singh is with Faculty of Electrical and Computer Engineering, Biomedical Engineering, Mathematical Sciences, University of Delaware, Newark, DE, USA 19716.
        {\tt\small absingh@udel.edu}}%
}
\begin{document}
\maketitle
\thispagestyle{empty}
\pagestyle{empty}

\begin{abstract}
The inherent probabilistic nature of the biochemical reactions, and low copy number of species can lead to stochasticity in gene expression across identical cells. As a result, after induction of gene expression, the time at which a specific protein count is reached is stochastic as well. Therefore events taking place at a critical protein level will see stochasticity in their timing. First--passage time (FPT), the time at which a stochastic process hits a critical threshold, provides a framework to model such events. Here, we investigate stochasticity in FPT. Particularly, we consider events for which controlling stochasticity is advantageous. As a possible regulatory mechanism, we also investigate effect of auto--regulation, where the transcription rate of gene depends on protein count, on stochasticity of FPT. Specifically, we investigate for an optimal auto-regulation which minimizes stochasticity in FPT, given fixed mean FPT and threshold. 

For this purpose, we model the gene expression at a single cell level. We find analytic formulas for statistical moments of the FPT in terms of model parameters. Moreover, we examine the gene expression model with auto--regulation. Interestingly, our results show that the stochasticity in FPT, for a fixed mean, is minimized when the transcription rate is independent of protein count. Further, we discuss the results in context of lysis time of an \textit{E. coli} cell infected by a $\lambda$ phage virus. An optimal lysis time provides evolutionary advantage to the $\lambda$ phage, suggesting a possible regulation to minimize its stochasticity. Our results indicate that there is no auto--regulation of the protein responsible for lysis. Moreover, congruent to experimental evidences, our analysis predicts that the expression of the lysis protein should have a small burst size.
\end{abstract}

\section{Introduction}
Gene expression is the process of \textit{transcription} of genetic information to mRNAs, and \textit{translation} of each mRNA to proteins. As the copy number of species involved in the process is small, the probabilistic nature of biochemical reactions reflects as stochastcity in gene expression \cite{Blake_noise_2003, Raser_noiseocc_2005, ArjunRaj_nnc_2008, Munsky_noisetoregulation_2012, Kaern_TP_2005, AbhiMohammad_var_2013}.

\par Stochasticity in gene expression has an important role in several cellular functions. For example, it can lead genetically identical cells to different cell--fates \cite{Losick_cellfate_2008, Arkin_ska_1998, Weinberger_lentiviral_2005, Veening_bistability_2008, Hasty_switchamp_2000, abhi_transcriptionalburstingHIV_2010}. This helps the cells in responding to the ever--changing environment \cite{Eldar_functional_2010, Kussell_pd_2005, Balaban_bactpersist_2004, Murat_survival_2008}. On the other hand, stochasticity  in expression of housekeeping genes can lead to diseased states \cite{Kemkemer_incnoise_2002, Cook_modeling_1998, Bahar_incvariation_2006}, and needs to be minimized \cite{Lehner_noiseminm_2008, Fraser_noiseminm_2004}. Accordingly, different regulatory mechanisms are employed to control stochastic fluctuations \cite{UriAlon_nm_2007,Becskei_engineeringstability_2000, ElSamad_regulated_2006, Swain_attenuatestochasticity_2004, Orrell_control_2004, abhi_fbstrength_2009, Tao_effectoffb_2007, abhi_mRNA_2011}. Auto--regulation wherein transcription rate is a function of protein count is an example of one such mechanism. Its effect on stochascticity in gene expression has been a subject of several studies \cite{abhi_fbstrength_2009, Tao_effectoffb_2007, abhi_mRNA_2011}.

\par After onset of gene expression, its stochasticity consequently manifests into  stochasticity in the time at which a certain protein level is reached. This implies that the timing of a cellular event which triggers at a critical protein level is stochastic in nature \cite{Amir_noisetiming_2007, Murugan_fluctuation_2011}. For instance, lysis time for an \textit{E. coli} cell infected by a $\lambda$ phage virus is stochastic. Lysis of the cell takes place when holin, the protein responsible for lysis, reaches a critical threshold \cite{White_holintriggering_2011, JohnDennehy_lst_2011, Abhi_ltv_2014}.

\par Further, it has been suggested that optimality in lysis time provides evolutionary advantage to $\lambda$ phage virus \cite{INWang_fitness_2006, Wang_evolutiontiming_1996, Heineman_optimal_2007, Shao_adsorptionoptimal_2008, Bonachela_optimallysis_2014}. This indicates that there could be some regulation of gene expression to ensure lysis at the optimal time, with minimum stochastic fluctuations. In this work, we study stochasticity in first--passage time (FPT), the time it takes for the protein count to reach a fixed threshold for the first time \cite{redner2001guide}, at a single--cell level. We investigate the effect of auto-regulation of transcription on stochasticity of FPT. In particular we seek answer to the question: given the mean FPT (corresponding to optimal lysis time, for example), what auto-regulatory feedback will lead to minimum stochasticity in the FPT?

\par We first formulate an unregulated gene expression model assuming transcription, translation, and mRNA degradation while considering proteins to be stable. Along the lines of \cite{Abhi_ltv_2014}, we find expressions for statistical moments of FPT for this model, and discuss their implications with respect to minimizing variance in FPT for given mean FPT. Next, we introduce auto-regulation in the above model and derive the moments for FPT. Then, we deduce the expression for optimal feedback function that minimizes the variance in FPT for a given mean. We show that a negative or positive feedback always results into higher variance in first passage time for a given mean than the case when there is no feedback. The results are validated by carrying out simulations. Also, various notations used in this work are tabulated in Table I.
 
\begin{table}[h!t]
\caption{Description of notations used in this work}
\centering
\begin{tabular}{l p{6.5cm}}
\toprule
$k_m$					          & Transcription rate for unregulated gene expression model. \\
$k_p$                   & Translation rate for both unregulated, and regulated gene expression. models \\
$\gamma_m$				      & mRNA degradation rate for both unregulated, and regulated gene. expression models \\
$B_i$            		    & Burst size after $i^{th}$ transcriptional event. \\
$\mu$					          & Parameter of geometric distribution corresponding to. protein bursts \\
$b$					            & Mean of protein burst size. \\
$P(t)$ 		              & Protein count at time $t$. \\
$P_{i}$          		    & Protein count after $i^{th}$ burst. \\
$k_m(P_i)$       		    & Transcription rate for auto-regulated gene expression model after $i^{th}$ transcription event.\\
$X$            		      & Threshold for protein count. \\
$N$              		    & Minimum number of transcription events for protein count to reach the threshold $X$. \\
$T_i$            		    & Waiting time for $i^{th}$ transcription event. \\
$Y \sim \exp(\alpha)$		& $Y$ is an Exponential random variable with parameter $\alpha$. The probability density function of $Y$ is given by $f_Y(y)=\alpha e^{-\alpha y},\;y\geq 0$. \\
$f_N(n)$            	  & Probability mass function for minimum number of transcription events to reach the threshold $X$.\\
$f_{P_i}(j)$            & Probability mass function for protein count after $i$ transcription events. \\
$\left< .\right>$   	  & Expectation operator. \\
Var                     & Variance. \\
$k_{\max}$              & Maximum possible transcription rate in model with feedback implemented using Hill function .\\
$r$                     & Fraction of transcription rate $k_{\max}$ that corresponds to minimum transcription rate in model with feedback implemented using Hill function.\\
$H$                     & Hill coefficient. \\
$c$                     & Coefficient proportional to binding efficiency;                          decides when half rate concentration is reached. \\

\bottomrule
\end{tabular}
\label{tab:notations}
\end{table}
\section{First--Passage Time For Gene Expression Model Without Regulation}
In this section, we formulate a stochastic gene expression model (as shown in Fig. \ref{fig:geneexpression}). Then, we define the FPT for this model and derive expressions for its statistical moments. We also discuss the implications of these expressions in context of minimizing variance of FPT, for fixed mean and threshold. 
\subsection{Model Formulation}
\begin{figure}[h]
\centering
{\includegraphics[width=0.3\textwidth]{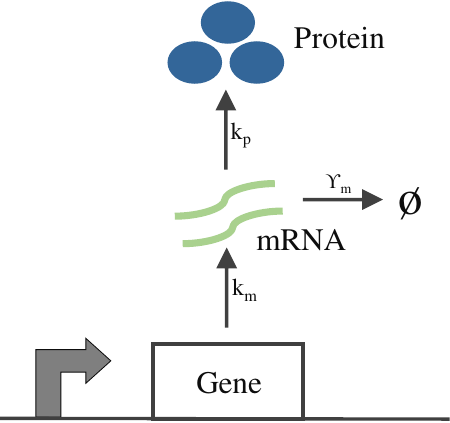}}
\caption{\textit{Model for gene expression without regulation}: The figure shows expression of a gene where mRNAs are transcribed from the gene at a rate $k_m$ and proteins are translated from each mRNA at a rate $k_p$. Proteins are assumed to be stable while each mRNA degrades with a rate $\gamma_m$. }
\label{fig:geneexpression}
\end{figure}
In the model under consideration transcription of mRNAs from the gene occurs at a rate $k_m$, translation of proteins from each mRNA occurs at a rate $k_p$, and each mRNA degrades at a rate $\gamma_m$. The time interval between two transcription events is exponentially distributed. We assume proteins to be stable as the lysis protein in $\lambda$ phage, i.e. holin, is stable \cite{Shao_holinstable_2009}. To further simplify the model, we assume each mRNA molecule degrades instantaneously after producing a burst of random number of protein molecules \cite{Friedman_pd_2006, ShahrezaeVahid_ad_2008, Paulsson_sge_2005, Berg_statfluct_1978}. Consistent with experimental, and theoretical evidences; we assume that protein burst follows a geometric distribution, and the mean burst size is given by $b=k_p/\gamma_m$ \cite{YuXiao_pom_2006, Elgart_connecting_2011}. Thus, the simplified model considers gene expression wherein each burst event (equivalent to transcription event) occurs at an exponentially distributed time with parameter $k_m$, and size of burst follows a geometric distribution with mean $b$.  

\par Let us denote the size of $i^{th}$ burst by random variable $B_i$ and the parameter of its distribution by $\mu$. The probability mass function, therefore, can be written as \cite{Degroot_prob_2012}:
\begin{equation}
\label{eqn:goemetricburstpmf}
\text{Pr}(B_i=k)=\mu \left(1-\mu \right)^k,\,\mu  \in (0,1],\,k \in \{0,1,2..\}.
\end{equation}
\par The mean burst size, $b$, can be expressed as \cite{Degroot_prob_2012}:
\begin{equation}
\left<B_i\right>=b = \frac{1-\mu}{\mu}.
\end{equation}
Further, let protein count after $n$ transcription events be denoted as $P_n$. It can be expressed as a sum of random variables $B_i$:
\begin{equation}
\label{eqn:proteincount}
P_n=\sum_{i=1}^{n} B_i.
\end{equation}
Being sum of independent and identically distributed geometric random variables, $P_n$ has a negative binomial distribution with parameters $n$ and $\mu$ \cite{Papoulis_prob_2002}. The probability mass function of $P_n$, denoted as $f_{P_n}(j)$, can be expressed as \cite{Papoulis_prob_2002}:
\begin{equation}
\label{eqn:proteincountpmf}
f_{P_n}(j)=\text{Pr}\left(\sum_{i=1}^{n}B_i = j \right) = {n+j-1 \choose n-1}\mu^n\left(1-\mu\right)^j.
\end{equation}
\par Also, the cumulative distribution function is given by \cite{Spiegel_scham_1992}:
\begin{equation}
\label{eqn:proteincountcdf}
\text{Pr} \left(\sum_{i=1}^{n}B_i \leq j \right) = 1- I_{1-\mu}(j+1,n),
\end{equation}
where $I_{1-\mu}(j+1,n)$ is regularized incomplete beta function: 
\begin{equation}
I_{1-\mu}(j+1,n) = \sum_{l=j+1}^{n+j} {n+j \choose l} (1-\mu)^l \mu^{j+n-l}, 
\end{equation}
and satisfies the following property: 
\begin{equation}
\label{eqn:Iproperty}
I_{1-\mu}(j+1,n) = 1-I_{\mu}(n,j+1).
\end{equation}
We have determined the distribution for protein population. Next, we defined the first--passage time (FPT) for the protein count to reach a certain threshold. 
\subsection{Expression for First Passage Time}
For a random process corresponding to protein count, $P(t)$, with $P(0)=0$, the first passage time (FPT), for a threshold $X$ is defined as:
\begin{equation}
\label{eqn:firstpassagetime}
FPT:= \inf \{t: P(t) \geq X \}, \quad X \in \{1, 2, 3, ...\}.
\end{equation}
Because in our model, the protein count changes only when a burst occurs (or equivalently, a transcription event occurs); we can calculate the minimum number of transcription events, $N$, it takes for the protein count to reach the threshold $X$ and define the FPT as sum of inter--burst arrival times. This has been depicted in Fig. \ref{fig:proteincount}. 
\begin{figure}[h!]
\centering
{\includegraphics[width=\columnwidth]{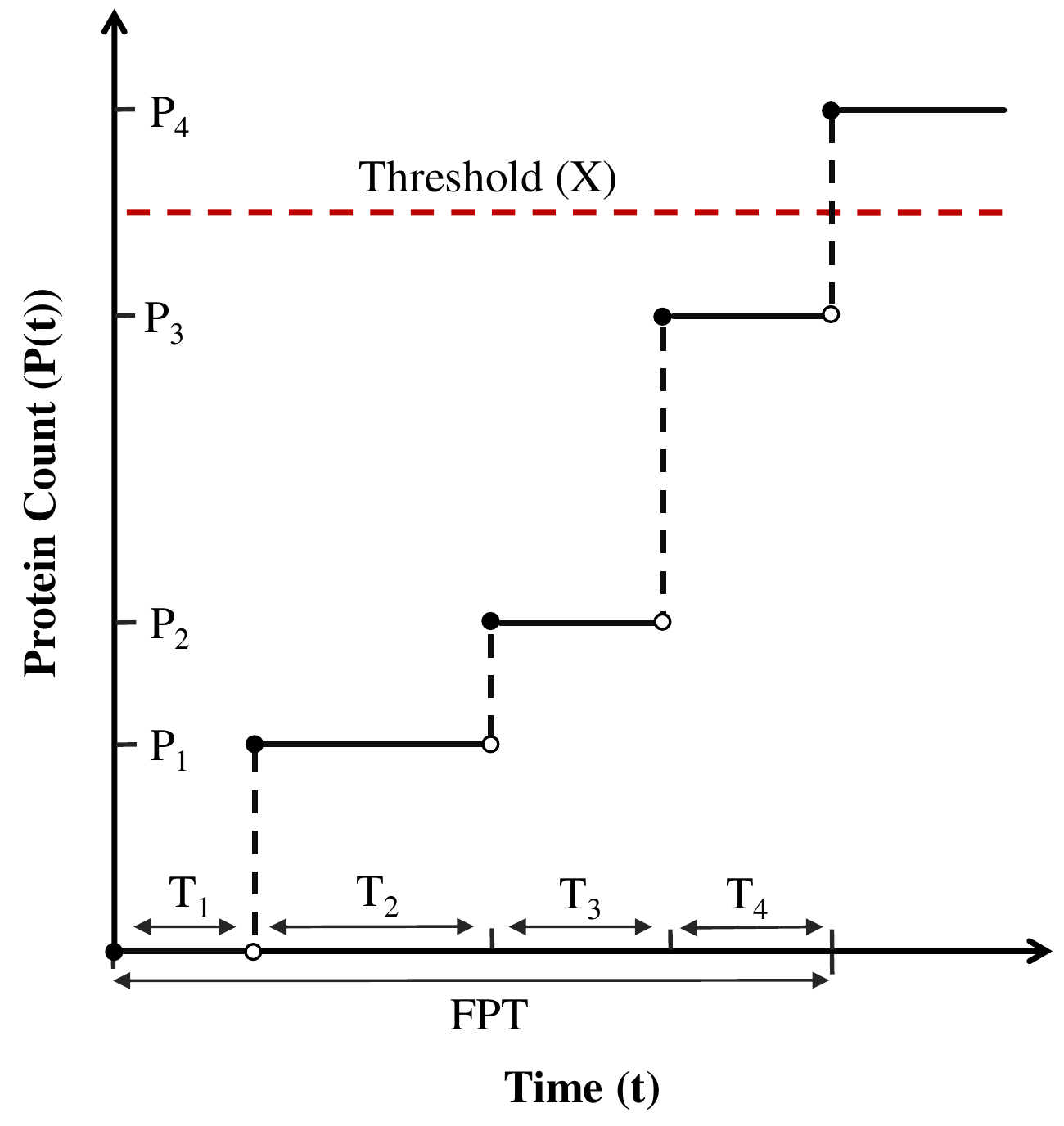}}
\caption{\textit{First--passage time for gene expression in burst limit}: The gene expresses in bursts which arrive at time intervals $T_i, \; i=1,2,...$. The protein count after $i^{th}$ burst is denoted by $P_i$. Protein count at time $t$ is denoted by $P(t)$, and is equal to $P_i$, where $i$ is number of bursts until time $t$. The first--passage time can be expressed as the sum of inter--burst arrival times till $P_i$ crosses the threshold $X$ for the first time.}
\label{fig:proteincount}
\end{figure}
\par Let the time between ${i-1}^{th}$ and $i^{th}$ bursts be denoted by random variable $T_i$, then:
\begin{equation}
\label{eqn:fptdef}
FPT=\sum_{i=1}^{N}T_i,
\end{equation}
\text{where $N$ is given by the following equation:}
\begin{equation}
\label{eqn:Ndefinition}
  N=\inf \left( n: P_n \geq X\right), \quad n \in \{1,2,...\}, \;\; X \geq 1.
\end{equation}
Note that in Eq. \eqref{eqn:fptdef}, $T_i$ are independent, and identically distributed exponential random variables with parameter $k_m$. We denote this by $T_i \sim \exp(k_m)$. Also, each of $T_i$ is independent of $N$. 
\par Using standard results from probability theory, one may write \cite{Ross_prob_2010}:
\begin{subequations}
\begin{align}
\left< FPT \right> &= \left<N \right> \left<T_i \right> \label{eqn:FPTMeanNOFB}, \\
\text{Var}(FPT) & = \left<N\right>\text{ Var }(T_i) + \text{ Var } (N) \left<T_i\right>^2 \label{eqn:FPTVarNOFB}.
\end{align}
\end{subequations}
It can be noted that to determine statistical moments of FPT in Eq. \eqref{eqn:FPTMeanNOFB}--\eqref{eqn:FPTVarNOFB}, we need to derive expressions for first two moments of $T_i$, and $N$.
\subsubsection{First Two Moments of $N$}
The cumulative distribution function for $N$ defined in Eq. \eqref{eqn:Ndefinition} can be written as:
\begin{subequations}
\begin{align}
\text{Pr} (N \leq n) &= \text{Pr} \left(P_n \geq X \right),  \\
                                       &= 1-  \text{Pr}\left(P_n \leq X-1 \right).
\end{align}
\end{subequations}

Since $\displaystyle P_n$ is a negative binomial distribution, we have:
\begin{subequations}
\begin{align}
\text{Pr} (N \leq n)  &= 1-\left(1-I_{1-\mu}(X,n)\right), \\
                            &= I_{1-\mu}(X,n) \label{eqn:cdfN1}.
\end{align}
\end{subequations}
Using the property of incomplete beta function mentioned in Eq. \eqref{eqn:Iproperty}, we get:
\begin{equation}																		
          \text{Pr} (N \leq n)=1-I_{\mu}(n,X) \label{eqn:cdfN}.
\end{equation}

Comparing with Eq. \eqref{eqn:proteincountpmf} and Eq. \eqref{eqn:proteincountcdf}, the probability mass function corresponding to Eq. \eqref{eqn:cdfN} can be written as:
\small{
\begin{equation}
\label{eqn:Npmf}
f_N(n)={n+X-2 \choose n-1}(1-\mu)^{n-1} \mu^{X},\;\;n \in \{1,2,...\}, \;X \geq 1.
\end{equation}
}
\normalsize
\begin{subequations}
First two statistical moments of the distribution in Eq. \eqref{eqn:Npmf} are given by \cite{Papoulis_prob_2002}:
\begin{align}
\left<N\right> &= \frac{\mu X}{1-\mu}+1=\frac{X}{b}+1, \label{eqn:meanN}\\
\text{Var}(N)&=\left<N^2\right>-\left<N\right>^2=\frac{\mu X}{(1-\mu)^2}=\frac{X}{b}\frac{1+b}{b}. \label{eqn:varN}
\end{align}
\end{subequations}
\subsubsection{First Two Moments of $T_i$}
Since $T_i \sim \exp(k_m)$, its statistical moments are given by:
\begin{subequations}
\begin{align}
\left< T_i \right> &= \frac{1}{k_m}  \label{eqn:TimeanIID},\\
\text{Var}(T_i) &= \left< T_i^2 \right>-\left< T_i \right>^2 = \frac{1}{k_m^2}=\left< T_i \right>^2\label{eqn:TivarIID}.
\end{align}
\end{subequations}
 
We now have expressions for first two moments of $T_i$, and $N$. The expressions for first two moments of FPT in terms of model parameters can, therefore, be written as:
\begin{align}
\left<FPT\right> &= \left(\frac{X}{b}+1\right)\frac{1}{k_m} \approx \frac{X}{bk_m}, \label{eqn:FPTmeanmodel}\\
\text{Var}(FPT) &= \frac{X(2b+1)+b^2}{b^2k_m^2}\approx \frac{X}{b^2k_m^2}(1+2b), \label{eqn:FPTvarmodel}
\end{align}
where the approximations are valid when $X \gg b$. It can be observed a smaller mean burst size $b$ would result in smaller variance of FPT. The mean FPT can be kept fixed by a commensurate change in the transcription rate, $k_m$. Therefore, the variance can independently be reduced by a lower mean burst size $b=k_p/\gamma_m$. This means adopting a high transcription rate $k_m$, and a low translation rate $k_p$ (and/or having a higher degradation rate $\gamma_m$ for the mRNAs) results in a lower variance in FPT without affecting its mean.

Further, we note that by using $\text{Var}(T_i)=\left<T_i\right>^2$ from Eq. \eqref{eqn:TivarIID}, we can deduce the following relationship between $\left<FTP\right>$ and $\left<FPT^2\right>$ from Eq. \eqref{eqn:FPTMeanNOFB} and Eq. \eqref{eqn:FPTVarNOFB}: 
\begin{equation}
\label{eqn:RelnFPTMomentsNoFB}
\left<FPT^2\right> = \frac{\left<FPT\right>^2}{\left<N\right>^2}\left<N^2\right>+\frac{\left<FPT\right>^2}{\left<N\right>}.
\end{equation}
We shall use above relationship in the later part of the paper while deriving expression of the auto-regulation function that minimizes variance in FPT, for given mean FPT. 
\par Next, we introduce auto-regulation of transcription rate by the protein count to investigate how the expressions for statistical moments of FPT change. 
\section{Introducing Auto-regulation in Gene Expression Model}
To investigate the effect of auto-regulation on statistical moments of FPT, we assume that transcription rate is a function of protein count, i.e., it changes after each transcription event.  We denote the transcription rate after arrival of $i^{th}$ burst as $k_m(P_{i})$.  Similar to previous section, we need to derive expression for moments of inter--burst arrival times $T_i$, and minimum number of transcription events $N$ in order to derive the expression for FPT moments defined in Eq. \eqref{eqn:fptdef}. 
\par We note that the translation burst size is independent of the transcription rate. Therefore, distribution of $N$ to reach a certain threshold $X$ is same as gene expression model without any regulation discussed in previous section. However, distribution of each $T_i$ is different and depends upon corresponding rate of transcription. 
\par We derive expressions for first two moments of each $T_i$ to find analytical forms of first two moments of FPT. 
\subsection{Inter--burst arrival time for auto-regulatory gene expression model}
It may be noted that if protein count after any burst event is known, arrival time for the next burst will be exponentially distributed. Therefore, the distribution of each $T_i$ can be modelled as a conditional exponential distribution. More specifically, we can write:
\begin{equation}
T_i \sim \exp \left(k_m(P_{i-1})| P_{i-1}\right),
\end{equation}

where $T_i$, and $P_{i-1}$ respectively denote the arrival time for $i^{th}$ burst and protein count after the ${i-1}^{th}$ burst. 

\par The expressions for mean and variance of $T_i$ can be calculated as follows.
\subsubsection{Mean}
Before arrival of the first burst, there are no protein molecules, i.e., $P_{i-1}=0$ for $i=1$. Therefore, we can write the mean for arrival time for the first burst as:
\begin{equation}
\label{eqn:meanT1}
\left<T_1\right>=\frac{1}{k_m(0)}.
\end{equation}
For  $i \in \{2, 3, 4...\}$, the corresponding arrival times would be conditionally exponential, implying: 
\begin{subequations}
\begin{align}
\left<T_i|P_{i-1}=j\right> &=\frac{1}{k_m(j)}, \\
\implies \left<T_i\right>&=\sum_{j=0}^{\infty} \frac{1}{k_m(j)} \text{Pr}\left(P_{i-1}=j\right), \\
                         &=\sum_{j=0}^{\infty}   \frac{1}{k_m(j)} f_{P_{i-1}}(j). \label{eqn:meanTi}
\end{align}
\end{subequations}
\subsubsection{Second Order Moments}
Adopting similar approach as above, we derive the expressions for second order moments of $T_i$. For $i=1$, we have:
\begin{subequations}
\begin{equation}
\label{eqn:secondmomentT1}
\left<T_1^2\right> = \frac{2}{k_m^2(0)}.
\end{equation}
For $ i \in \{2,3,4...\}$:
\begin{align}
\left<T_i^2|P_{i-1}=j\right> &=\frac{2}{k_m^2(j)}, \\
\implies \left<T_i^2\right>&=\sum_{j=0}^{\infty} \frac{2}{k_m^2(j)} \text{Pr}\left(P_{i-1}=j\right), \\
                         &=\sum_{j=0}^{\infty}   \frac{2}{k_m^2(j)} f_{P_{i-1}}(j). \label{eqn:secondmomentTi}
\end{align}
\end{subequations}
Therefore the expression for variance of $T_1$:
\begin{equation}
\label{eqn:varT1}
\text{Var} (T_1)=\frac{1}{k_m^2(0)}=\left<T_1\right>^2.
\end{equation}
For $i \in \{2,3,4,...\}$, the expression for $\text{Var}(T_i)$ will be
\begin{equation}
\label{eqn:varTi}
\text{Var}(T_i)=\sum_{j=0}^{\infty} \frac{2}{k_m^2(j)} f_{P_{i-1}}(j) - \left[ \sum_{j=0}^{\infty}  \frac{1}{k_m(j)} f_{P_{i-1}}(j) \right]^2.
\end{equation}

Moreover, we have following relationship first two moments of the random variable $1/k_m(P_{i-1})$:
\begin{equation}
\sum_{j=0}^{\infty} \frac{1}{k_m^2(j)} f_{P_{i-1}}(j) \geq \left[ \sum_{j=0}^{\infty}  \frac{1}{k_m(j)} f_{P_{i-1}}(j) \right]^2,
\end{equation}
which alongwith Eq. \eqref{eqn:varTi}, and \eqref{eqn:varT1} yields: 
\begin{equation}
\label{eqn:varmeanrelnTi}
\text{Var}(T_i)\geq \left< T_i\right>^2.
\end{equation}
We note that the equality above holds for $i=1$. We will use it in later part of the paper while deducing the expression for optimal auto-regulation that leads to minimum variance in the FPT for fixed mean. 
\par Having derived the expressions for moments of inter--bursts arrival times, we see how the introduction of auto-regulation influences the expressions for FPT moments. 
\subsection{FPT for auto-regulatory gene expression model}
We present the expressions for statistical moments of FPT in theorem--proof format. In developing the proofs, we make use of the fact that each $T_i$ will be independent of $N$. Also, $T_i$ are independent of each other. However, they are not identically distributed like the unregulated gene expression case discussed in previous section. 
\begin{theorem}[Mean of First Passage Time]\label{thm:meanFPT}
For the FPT defined in Eq. \eqref{eqn:fptdef}, the mean FPT is given by following expression: 
\begin{equation}
\left<FPT\right>=\sum_{n=1}^{\infty} \sum_{i=1}^{n}\left<T_i\right> f_N(n) \label{eqn:FPTMeanI},
\end{equation}
where $\displaystyle f_N(n)$ is defined in Eq. \eqref{eqn:Npmf}, $\displaystyle \left< T_i \right>$ is given by Eq. \eqref{eqn:meanT1}, \eqref{eqn:meanTi} and $\displaystyle \left< N \right>$ is given by Eq. \eqref{eqn:meanN}.
\end{theorem}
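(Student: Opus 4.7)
The plan is to condition on the stopping index $N$ and then invoke the independence of each $T_i$ from $N$ that was asserted just before the theorem, reducing the computation to a Wald-type identity for independent (but non-identically distributed) summands.

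First I would decompose the expectation via the law of total expectation, writing
\begin{equation}
\langle FPT \rangle = \sum_{n=1}^{\infty} \langle FPT \mid N = n \rangle \, f_N(n),
\end{equation}
where $f_N(n)$ is the probability mass function of $N$ given in Eq.~\eqref{eqn:Npmf}. This partitions the sample space according to the number of bursts needed to cross the threshold, and since $N$ is almost surely finite (by the geometric-burst assumption and $X<\infty$), no tail issue arises.

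Next, on the event $\{N=n\}$ the sum in Eq.~\eqref{eqn:fptdef} truncates to exactly $n$ terms, so linearity of the conditional expectation gives $\langle FPT \mid N=n \rangle = \sum_{i=1}^{n} \langle T_i \mid N=n \rangle$. At this point I would invoke the stated independence of each $T_i$ from $N$ to drop the conditioning, replacing $\langle T_i \mid N=n \rangle$ by the unconditional mean $\langle T_i \rangle$ supplied by Eqs.~\eqref{eqn:meanT1} and \eqref{eqn:meanTi}. Substituting back into the total-expectation formula yields precisely Eq.~\eqref{eqn:FPTMeanI}, and the mean of $N$ from Eq.~\eqref{eqn:meanN} is then available should one wish to collapse the inner sum in the unregulated limit.

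The main conceptual hurdle is justifying the replacement of $\langle T_i \mid N = n \rangle$ by the marginal $\langle T_i \rangle$: the rate of $T_i$ is set by $P_{i-1}$, which in turn co-determines the event $\{N = n\}$ through the burst history. Once $\langle T_i \rangle$ is read as the already marginalized quantity appearing in Eq.~\eqref{eqn:meanTi}, obtained by averaging $1/k_m(P_{i-1})$ against the pmf of $P_{i-1}$, the interchange is immediate under the independence hypothesis declared by the authors. Beyond this, no further technical difficulty arises: only first moments of $T_i$ enter, and no second-order cross-terms need to be controlled, so the remainder of the argument is pure bookkeeping.
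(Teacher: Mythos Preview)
Your proposal is correct and mirrors the paper's own proof essentially line for line: condition on $N=n$, use linearity together with the asserted independence of each $T_i$ from $N$ to obtain $\langle FPT\mid N=n\rangle=\sum_{i=1}^{n}\langle T_i\rangle$, and then uncondition. You are in fact slightly more explicit than the paper in flagging the independence step as the crux, but the argument is otherwise identical.
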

\begin{proof}
To prove the result, we first find conditional expectation given $N=n$ then we have: 
\begin{subequations}
\begin{align}
\left<FPT|N=n\right>&=\left<\sum_{i=1}^{n}T_i\right>, \\
                    &=\sum_{i=1}^{n}\left<T_i\right>.
\end{align}
Unconditioning above expression with respect to $N$:
\begin{align}
\left<FPT\right> &= \sum_{n=1}^{\infty} \sum_{i=1}^{n}\left<T_i\right> Pr(N=n), \\
                          &= \sum_{n=1}^{\infty} \sum_{i=1}^{n}\left<T_i\right> f_N(n). 
\end{align}
\end{subequations}
This completes the proof. 
\end{proof}
\begin{theorem}[Variance of First Passage Time]
For the FPT defined in Eq. \eqref{eqn:fptdef}, the variance of FPT is given by the following expression: 
\small{
\begin{equation}
\begin{aligned}
\sum_{n=1}^{\infty}\left(\sum_{i=1}^{n} \text{Var}\left(T_i\right) + \left(\sum_{i=1}^{n}\left<T_i\right>\right)^2 \right)f_N(n) - \left( \sum_{n=1}^{\infty} \sum_{i=1}^{n}\left<T_i\right> f_N(n)\right)^2,
\end{aligned}
\label{eqn:varFPT}
\end{equation}
}
\normalsize
where $\displaystyle f_N(n)$ is defined in Eq. \eqref{eqn:Npmf}, $\displaystyle \left< N \right>$ is given by Eq. \eqref{eqn:meanN}, $\displaystyle \left< N^2 \right>$ can be deduced from Eq. \eqref{eqn:varN}, $\displaystyle \left< T_i \right>$ is given by Eq.  \eqref{eqn:meanT1}, \eqref{eqn:meanTi} and $\text{Var}(T_i)$ is given by Eq. \eqref{eqn:varT1}, \eqref{eqn:varTi}.
\end{theorem}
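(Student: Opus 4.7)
The plan is to mirror the conditioning argument used for Theorem \ref{thm:meanFPT}, only now taking second moments. First I would condition on $N=n$ and write $FPT \mid N=n = \sum_{i=1}^n T_i$, a deterministic sum of independent (though no longer identically distributed) random variables; this independence is preserved because each $T_i$ is also independent of $N$, as already noted in the preamble to the theorem.

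Next I would compute the conditional second moment. For independent summands the variance is additive, so
\begin{equation}
\text{Var}\!\left(\sum_{i=1}^{n} T_i\right) = \sum_{i=1}^{n} \text{Var}(T_i),
\end{equation}
which gives
\begin{equation}
\left<FPT^2 \mid N=n\right> = \sum_{i=1}^{n} \text{Var}(T_i) + \left(\sum_{i=1}^{n}\left<T_i\right>\right)^{\!2}.
\end{equation}
Unconditioning with respect to $N$ by summing against $f_N(n)$ yields the first term of Eq. \eqref{eqn:varFPT} directly.

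To finish, I would apply $\text{Var}(FPT) = \left<FPT^2\right> - \left<FPT\right>^2$ and substitute the mean from Theorem \ref{thm:meanFPT} to obtain the second (subtracted) term of Eq. \eqref{eqn:varFPT}. At this point the individual moments $\left<T_i\right>$ and $\text{Var}(T_i)$ appearing inside the sums can simply be cited from Eqs. \eqref{eqn:meanT1}, \eqref{eqn:meanTi}, \eqref{eqn:varT1}, \eqref{eqn:varTi}, and $f_N(n)$ from Eq. \eqref{eqn:Npmf}, so no further computation is needed.

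The derivation is essentially a bookkeeping exercise, so I do not expect a genuine obstacle; the only subtle point is to justify that the variance-additivity step is legitimate even though the $T_i$ are not identically distributed. This follows because the inter--burst times remain pairwise (in fact mutually) independent: once $P_{i-1}$ is specified, $T_i$ is exponential and depends only on $k_m(P_{i-1})$, while $P_{i-1}$ itself is built from the earlier independent burst sizes $B_1,\dots,B_{i-1}$, which are independent of every $T_j$. Flagging this independence explicitly is the only care the write--up requires; everything else follows from the standard identity $\left<X^2\right>=\text{Var}(X)+\left<X\right>^2$.
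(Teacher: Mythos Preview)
Your proposal is correct and follows essentially the same route as the paper: condition on $N=n$, compute $\left<FPT^2\mid N=n\right>$, uncondition against $f_N(n)$, and subtract $\left<FPT\right>^2$. The only cosmetic difference is that the paper expands $\bigl(\sum_{i} T_i\bigr)^2$ as a double sum and factors $\left<T_iT_j\right>=\left<T_i\right>\left<T_j\right>$ term by term before regrouping, whereas you invoke additivity of variance directly; the two computations are equivalent and both rest on the same independence assumption stated in the preamble.
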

\begin{proof}
 Since expression for $\left<FPT\right>$ is known and given by Eq. \eqref{eqn:FPTMeanI}, we need to find expression for $\left<FPT^2\right>$, in order to find expression for variance of FPT. 
 
\par Using the definition of first passage time in Eq. \eqref{eqn:fptdef}, we have:
\begin{subequations}
\begin{align}
      \left<FPT^2|N= n\right> &= \left<\sum_{i=1}^{n}\sum_{j=1}^{n}T_i T_j\right>, \\
                              &=  \left<\sum_{i=1}^{n}T_i^2+\sum_{i=1}^{n}\sum_{j=1 \neq i}^{n}T_i T_j\right>
\end{align}

Since $T_i^2$ are independent of each other, and  $T_j$ are independent of $T_i$ for each $j\neq i$; we can write:
\begin{align}
 \left<FPT^2|N= n\right> &= \sum_{i=1}^{n}\left<T_i^2\right> + \sum_{i=1}^{n}\sum_{j=1\neq i}^{n}\left<T_iT_j\right>, \\
                         &= \sum_{i=1}^{n}\left<T_i^2\right> + \sum_{i=1}^{n}\sum_{j=1\neq i}^{n}\left<T_i\right>\left<T_j\right>.
\end{align}
Using $\text{Var}(T_i)=\left<T_i^2\right>-\left<T_i\right>^2$, we have: 		
\begin{equation}								
\left<FPT^2|N= n\right> = \sum_{i=1}^{n} \text{Var}\left(T_i\right) + \left(\sum_{i=1}^{n}\left<T_i\right>\right)^2.
\end{equation}
\end{subequations}
Unconditioning with respect to $N$, expression for $\left<FPT^2\right>$ becomes:
\begin{equation}
\left<FPT^2\right> = \sum_{n=1}^{\infty}\left(\sum_{i=1}^{n} \text{Var}\left(T_i\right) + \left(\sum_{i=1}^{n}\left<T_i\right>\right)^2 \right)f_N(n).
\label{eqn:expectFPTsquared}
\end{equation}
Therefore, using Eq. \eqref{eqn:FPTMeanI}, and Eq. \eqref{eqn:expectFPTsquared}; expression for $\text{Var}(FPT)$ becomes: 
\small{
\begin{equation}
\begin{aligned}
\sum_{n=1}^{\infty}\left(\sum_{i=1}^{n} \text{Var}\left(T_i\right) + \left(\sum_{i=1}^{n}\left<T_i\right>\right)^2 \right)f_N(n) - \left( \sum_{n=1}^{\infty} \sum_{i=1}^{n}\left<T_i\right> f_N(n)\right)^2.
\end{aligned}
\end{equation}
}
This completes the proof.
\end{proof}
So far we have developed analytical expressions for mean and variance of FPT when there is an auto-regulatory feedback to transcription rate from protein count. In the next section, we  make use of these expressions to deduce the optimal auto-regulation function to minimize the variance of FPT assuming fixed mean FPT.

\section{Minimizing Variance in First Passage Time for Given Mean}

\begin{table*}[h!t]
\caption{Model parameters used for simulation of positive, negative, and no feedback cases. }
\centering
 \begin{tabular}{lllll}
\toprule 
  Parameter & Unit & Positive feedback & Negative feedback & No feedback   \\
  \midrule
  $k_{\max}$ & mRNA produced per minute &19.35 & 84 & 10 \\
  $k_p$ & protein produced per mRNA per minute & 2.65 & 2.65 & 2.65 \\
  $\gamma_m$ & per minute & 0.3 & 0.3 & 0.3 \\
  $X$ & molecules & 5000 & 5000 & 5000 \\
	$r$ & - &0.05 & 0.05 & - \\
  $c$ & per molecule &0.002 & 0.002 & - \\
	$H$ & - & 2 & 2 & - \\
\bottomrule
 \end{tabular}
\label{tab:parameters}
\end{table*}
In this section, we find expression for the auto-regulatory feedback function, $k_m(P_{i-1}),\;\; i \in \{1,2,3,...\}$ that gives minimum variance in FPT, given the mean FPT and event threshold are fixed. The result is presented in form of a theorem. 
\begin{theorem}[Optimal feedback for minimum variance]
Let the first passage time be defined as Eq. \eqref{eqn:fptdef}, and its mean and variance, respectively, given by Eq. \eqref{eqn:FPTMeanI} and Eq. \eqref{eqn:varFPT}. Then, the optimal function to minimize the variance of FPT for a given mean of FPT will be constant, given by following expression: 
\begin{equation}
\label{eqn:optfunc}
k_m\left(P_{i-1}\right)=\frac{\left<N\right>}{\left<FPT\right>}, \quad \forall i \in \{1, 2, 3, ...\},
\end{equation}
where $\left<N\right>$ denotes the minimum number of transcription events required to reach the FPT threshold, and is given by Eq. \eqref{eqn:meanN}. 
\end{theorem}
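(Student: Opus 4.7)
The plan is to leverage the inequality in Eq.~\eqref{eqn:varmeanrelnTi}, $\text{Var}(T_i) \geq \langle T_i \rangle^2$, which becomes an equality exactly when $1/k_m(P_{i-1})$ is almost surely constant under the distribution of $P_{i-1}$. First I would substitute this bound into the variance expression Eq.~\eqref{eqn:varFPT}, producing a lower bound on $\text{Var}(FPT)$ that depends only on the sequence $\{\langle T_i \rangle\}$. Because the protein burst is geometric and may equal zero, the support of $P_{i-1}$ for $i \geq 2$ is all of $\{0,1,2,\ldots\}$; hence simultaneous equality in Eq.~\eqref{eqn:varmeanrelnTi} for every $i$ forces $k_m(\cdot)$ to take a single value on $\{0,1,2,\ldots\}$, i.e.\ to be a constant function of the protein count.

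Once $k_m \equiv k$, the inter-burst times $T_i$ are i.i.d.\ exponentials with common mean $1/k$. Substituting $\langle T_i \rangle = 1/k$ into the mean-FPT formula Eq.~\eqref{eqn:FPTMeanI} yields $\langle FPT \rangle = \langle N \rangle/k$, so the prescribed mean uniquely determines $k = \langle N \rangle/\langle FPT \rangle$, which is the claimed expression Eq.~\eqref{eqn:optfunc}. Simultaneously $\text{Var}(FPT)$ collapses to the unregulated expression Eq.~\eqref{eqn:FPTvarmodel}, giving a specific numerical value against which all other feedback choices must be compared.

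The hardest step will be ruling out a non-constant $k_m$ that matches the prescribed mean yet beats this variance: a varying transcription rate enlarges the excess $\text{Var}(T_i) - \langle T_i \rangle^2 > 0$ for some $i \geq 2$ (pushing $\text{Var}(FPT)$ up), but it also reshapes the sequence $\{\langle T_i \rangle\}$ and could in principle lower the $\{\langle T_i \rangle\}$-only portion of the bound. I would close this gap by applying the Cauchy--Schwarz inequality to the constraint $\sum_i \langle T_i \rangle \,\text{Pr}(N \geq i) = \langle FPT \rangle$, obtaining $\sum_i \langle T_i \rangle^2 \,\text{Pr}(N \geq i) \geq \langle FPT \rangle^2/\langle N \rangle$ with equality iff the $\langle T_i \rangle$ all coincide, and pairing it with a matching bound on $\sum_n (\sum_{i \leq n} \langle T_i \rangle)^2 f_N(n)$. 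Both inequalities saturate exactly at constant $\langle T_i \rangle$, which also zeroes out the excess variance term, so the combined lower bound is itself minimized in the constant-$k_m$ regime and the global minimizer is $k_m \equiv \langle N \rangle/\langle FPT \rangle$.
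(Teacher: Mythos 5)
Your opening steps are sound and match the paper's ingredients: the bound $\text{Var}(T_i)\ge\langle T_i\rangle^2$ of Eq.~\eqref{eqn:varmeanrelnTi}, with equality precisely when $1/k_m(P_{i-1})$ is degenerate, your full-support observation that simultaneous equality forces $k_m(\cdot)$ to be a constant function, and the identification $k=\langle N\rangle/\langle FPT\rangle$ from Eq.~\eqref{eqn:FPTMeanI}. Your weighted Cauchy--Schwarz step is also correct: writing $\langle FPT\rangle=\sum_i\langle T_i\rangle\Pr(N\ge i)$ and $\sum_i\Pr(N\ge i)=\langle N\rangle$ gives $\sum_i\langle T_i\rangle^2\Pr(N\ge i)\ge\langle FPT\rangle^2/\langle N\rangle$, with equality only for constant $\langle T_i\rangle$. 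The gap is the ``matching bound'' you invoke for the other piece. The inequality you would need, $\sum_n\bigl(\sum_{i\le n}\langle T_i\rangle\bigr)^2 f_N(n)\ge\langle FPT\rangle^2\langle N^2\rangle/\langle N\rangle^2$ under the fixed-mean constraint alone, is false: take $\langle T_1\rangle=\langle FPT\rangle$ and $\langle T_i\rangle\to 0$ for $i\ge 2$; then $\sum_{i\le n}\langle T_i\rangle\equiv\langle FPT\rangle$, the constraint $\sum_n\sum_{i\le n}\langle T_i\rangle f_N(n)=\langle FPT\rangle$ is met, yet the sum equals $\langle FPT\rangle^2$, strictly below the putative bound whenever $\text{Var}(N)>0$. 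So no term-by-term bound of the kind you describe can exist: a non-constant profile can push the second sum \emph{below} its no-feedback value, and the constant rate wins only because the first sum (together with the excess $\text{Var}(T_i)-\langle T_i\rangle^2$) then rises by at least as much. The two pieces must be minimized jointly, which your decomposition into two separately saturating inequalities cannot deliver.

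This is exactly where the paper's proof proceeds differently: it perturbs around the conjectured optimum, writing $1/k_m(P_{i-1})=\langle FPT\rangle/\langle N\rangle+\delta_i$, hence $\langle T_i\rangle=\langle FPT\rangle/\langle N\rangle+\epsilon_i$ with $\epsilon_i=\langle\delta_i\rangle$, converts the fixed mean into the single linear constraint \eqref{eqn:epsiloncondition}, and expands $\langle FPT^2\rangle$ about the constant solution so that, after applying \eqref{eqn:varmeanrelnTi}, the deviation enters only through the nonnegative terms $\sum_n\bigl(\sum_{i\le n}\epsilon_i^2+(\sum_{i\le n}\epsilon_i)^2\bigr)f_N(n)$, with equality recovered in the unregulated case via Eq.~\eqref{eqn:RelnFPTMomentsNoFB}. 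If you repair your argument, it should be by a joint expansion of this type (or a convexity/Lagrangian treatment of both sums at once), not by pairing independent inequalities. Be aware, too, that the expansion generates a cross term $\tfrac{2\langle FPT\rangle}{\langle N\rangle}\sum_n n\bigl(\sum_{i\le n}\epsilon_i\bigr)f_N(n)$ which the constraint \eqref{eqn:epsiloncondition} does not obviously annihilate (it is silently dropped in Eq.~\eqref{eqn:meanFPTsqrd_I}), so a fully rigorous version of either route must control that term explicitly.
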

\begin{proof}
We assume that each burst event adds a perturbation to transcription rate, i.e., $1/k_m(P_{i-1})$ can be written as: 
\begin{equation}
\label{eqn:assumedoptimalfunction}
\frac{1}{k_m(P_{i-1})}:=\frac{\left< FPT \right>}{\left<N\right>}+\delta_i, 
\end{equation}
where $\delta_i$ is perturbation corresponding to transcription rate after $i-1^{th}$ burst. To prove the result, we shall prove that the variance of FPT for given mean will minimize when $\delta_i=0$.
  
\par Recalling the expression for $\left<FPT\right>$ from Eq. \eqref{eqn:FPTMeanI}:
\begin{equation}
\left<FPT\right> = \sum_{n=1}^{\infty} \sum_{i=1}^{n}\left<T_i\right> f_N(n).
\end{equation}

Using expressions in Eqs. \eqref{eqn:meanT1}, \eqref{eqn:meanTi}, we can deduce the expressions for $\left<T_i\right>$ as:
\begin{equation}
\label{eqn:optimalTiform}
\left<T_i\right>=\frac{\left< FPT \right>}{\left<N\right>}+\epsilon_i, 
\end{equation}
where $\epsilon_i$ is related with $\delta_i$ by following expression:
\begin{equation}
\label{eqn:epsdelrelation}
\epsilon_i:=\sum_{j=1}^{\infty}\delta_i f_{P_{i-1}}(j)=\left<\delta_i\right>.
\end{equation} 
Substituting expression for $\left<T_i\right>$ from Eq. \eqref{eqn:optimalTiform}, we have:
\begin{subequations}
\begin{align}
 \left<FPT\right> &= \sum_{n=1}^{\infty} \sum_{i=1}^{n} \left(\frac{\left<FPT\right>}{\left<N\right>}+\epsilon_i\right) f_N(n),  \\
   &= \sum_{n=1}^{\infty} \left(\frac{\left<FPT\right>}{\left<N\right>}n +  \sum_{i=1}^{n} \epsilon_i \right)f_N(n),  \\
   & = \frac{\left<FPT\right>}{\left<N\right>}\sum_{n=1}^{\infty}nf_N(n)  + \sum_{n=1}^{\infty} \sum_{i=1}^{n}\epsilon_i f_N(n).
\end{align}
\end{subequations}
Since $\displaystyle \sum_{n=1}^{\infty}nf_N(n)=\left<N\right>$, we have:
\begin{equation}
\label{eqn:epsiloncondition}
 \sum_{n=1}^{\infty} \sum_{i=1}^{n}\epsilon_i f_N(n)=0.
\end{equation}
Note that for a fixed mean FPT, minimizing the variance of FPT and minimizing the second order moment $\left<FPT^2\right>$ are equivalent. 

\par Now, we consider the expression for $\left<FPT^2\right>$, and use expression in Eq. \eqref{eqn:epsiloncondition} to deduce the desired optimal function. From Eq. \eqref{eqn:varFPT}, we have: 
\begin{equation}
\left<FPT^2\right>=\sum_{n=1}^{\infty}\left(\sum_{i=1}^{n} \text{Var}\left(T_i\right) + \left(\sum_{i=1}^{n}\left<T_i\right>\right)^2 \right)f_N(n).
\end{equation}
Substituting value of $\displaystyle \left<T_i\right>$ from Eq. \eqref{eqn:assumedoptimalfunction}, we get following expression for $\displaystyle \left<FPT^2\right>$:
\small{
\begin{equation}
\left<FPT^2\right>=\sum_{n=1}^{\infty}\left(\sum_{i=1}^{n} \text{Var}\left(T_i\right) + \left(\sum_{i=1}^{n}\left(\frac{\left< FPT \right>}{\left<N\right>}+\epsilon_i\right)\right)^2\right)f_N(n).
\end{equation}
}
\normalsize
Further simplifying and using relation obtained in Eq. \eqref{eqn:epsiloncondition} yields: 
\small{
\begin{equation}
\label{eqn:meanFPTsqrd_I}
\left<FPT^2\right>=\frac{\left<FPT\right>^2}{\left<N\right>^2}\left<N^2\right>+\sum_{n=1}^{\infty}\left(\sum_{i=1}^{n} \text{Var}\left(T_i\right) + \left(\sum_{i=1}^{n}\epsilon_i\right)^2\right)f_N(n).
\end{equation}
}
\normalsize
Using Eq. \eqref{eqn:varmeanrelnTi} in Eq. \eqref{eqn:meanFPTsqrd_I}:
\begin{subequations}
\begin{align}
\left<FPT^2\right> &\geq \frac{\left<FPT\right>^2}{\left<N\right>^2}\left<N^2\right>\nonumber \\  &+\sum_{n=1}^{\infty}\left(\sum_{i=1}^{n} \left<T_i\right>^2 + \left(\sum_{i=1}^{n}\epsilon_i \right)^2\right)f_N(n), \label{eqn:meanFPTsqrd_II}\\
\implies \left<FPT^2\right> &\geq \frac{\left<FPT\right>^2}{\left<N\right>^2}\left<N^2\right>+\frac{\left<FPT\right>^2}{\left<N\right>} \nonumber \\  &+ \sum_{n=1}^{\infty}\left(\sum_{i=1}^{n} \epsilon_i^2 + \left(\sum_{i=1}^{n}\epsilon_i \right)^2\right)f_N(n). \label{eqn:meanFPTsqrd_III}
\end{align}
\normalsize
\end{subequations}
Further, we note that in above expression if $\epsilon_i=0$ (or equivalently $\left<\delta_i\right>=0$), the expression minimizes and reduces to: 
\begin{align}
\label{eqn:FPTsqFPTIneq}
\left<FPT^2\right> \geq \frac{\left<FPT\right>^2}{\left<N\right>^2}\left<N^2\right>+\frac{\left<FPT\right>^2}{\left<N\right>}.
\end{align}
Recalling Eq. \eqref{eqn:RelnFPTMomentsNoFB}, we observe that equality in above expression holds for unregulated gene expression case, which essentially means $\delta_i=0$. This proves the desired result. 
\end{proof}

\par In this section, we proved that having no auto-regulation of transcription rate provides minimum stochasticity in the FPT, if mean FPT and event threshold are kept fixed. However, since our analysis simplified the gene expression model to burst--limit, we are interested in validating whether it is true if we don't make an approximation. In the next section, we discuss the computer simulations we carried out for this purpose.
\section{Simulation Results}
\begin{figure*}[bt!]
\centering
{\includegraphics[width=\textwidth]{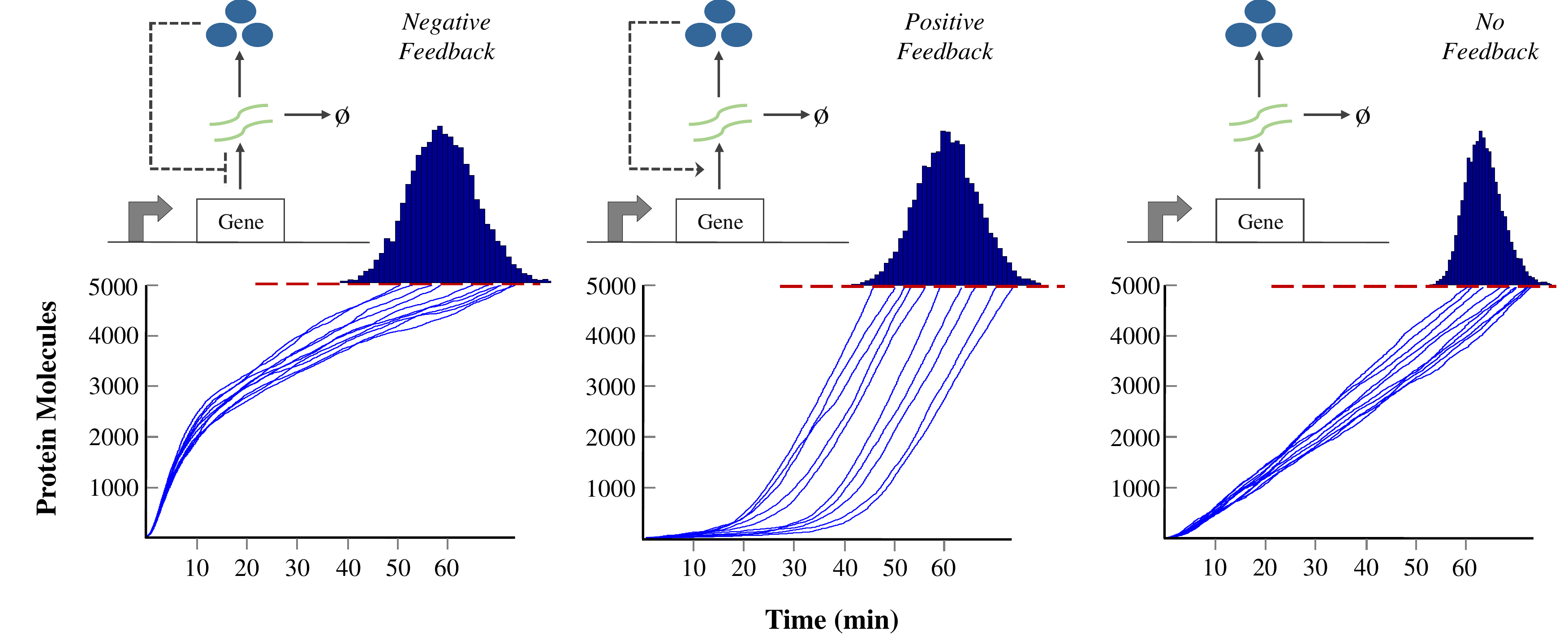}}
\caption{\textit{No protein--feedback regulation of transcription rate results in minimum stochasticity in FPT for a given mean and threshold}. In each figure, the dashed line in red represents the FPT threshold (assumed to be 5000 protein molecules here); the trajectories in the lower part depict the time evolution of protein population (10 sample trajectories); the histogram of on top represents distribution of FPT (10000 simulations); the parameters have been chosen to keep the mean FPT $\approx$ 60 min.}
\label{fig:simulationsresults}
\end{figure*}
In order to verify the result deduced in previous section, we carried out Monte Carlo simulations using Gillespie's algorithm \cite{Gillespie_ess_1977}. We did not specifically assume that production of protein is in geometric bursts with parameter $b$. Instead, we assumed a non--zero half--life for mRNA thereby relaxing the burst approximation.  

To simulate, we considered three separate cases: no feedback, negative feedback and positive feedback. The positive feedback is implemented using Hill function as follows:
\begin{equation}
k_m(j)=k_{\max} \left(r + (1-r) \frac{(jc)^H}{1+(jc)^H} \right),
\end{equation}
where $k_{\max}$ is maximum transcription rate, $r$ represents minimum transcription rate as the fraction of $k_{\max}$, $H$ denotes the Hill coefficient while $c$ is  coefficient proportional to the binding affinity (when $j=1/c,\; k_m(j) =k_{max}/2$). 
\par Similarly, the negative feedback is implemented using following function: 
\begin{equation}
k_m(j)=k_{\max} \left(r + (1-r) \frac{1}{1+(jc)^H} \right).
\end{equation}

We carried out the simulations for several sets of parameters assuming a fixed event threshold. Rest of the model parameters were chosen to keep the mean FPT approximately equal. In all of them, we found that no--feedback case has minimum variance in FPT.

\par In Table \ref{tab:parameters}, we present one set of such parameters. We assumed the event threshold $X=5000$. Other parameters are chosen in a way that the mean FPT $\approx$ $60$ minutes. 

Simulation results for 10000 realizations are shown in Fig. \ref{fig:simulationsresults}. We note that the variance is minimum in no--feedback case, validating our theoretical claims for this set of parameter values.
\section{Discussion}
In this work, we studied stochasticity in event timing at a single cell level. We considered a standard gene expression model without protein degradation. Next, we formulated the FPT problem for this model and derived the formulas for statistical moments of FPT. Further, we introduced auto-regulation in the gene expression wherein the transcription rate is a function of protein count. We derived the formulas for moments of FPT in this case as well, and demonstrated that for a given mean of FPT, the variance in FPT is minimized when there is no auto-regulation of gene expression. The result was verified with simulations as well.

\par The result can be connected to the $\lambda$ phage lysis time. Due to existence of optimal lysis time \cite{INWang_fitness_2006, Wang_evolutiontiming_1996}, the phage would possibly like to kill the cell at that time with as much precision as possible. Thus, it should resort to a strategy that would minimize the lysis time variance and hence have no protein--dependent feedback regulation of transcription rate in the expression of holin. In expression from late promoter in $\lambda$ phage, which produces holin, has no evidence of a regulation \cite{Ptashne_GeneticSwitch_1991, Oppenheim_switch_2005}.

\par Recalling that in no auto--regulation case too, the variance of FPT can be independently decreased by lowering the mean burst size $b$. Other studies also reveal that in case of $\lambda$ phage, the burst size is indeed small \cite{JohnDennehy_lst_2011, INWang_fitness_2006}. Also, antiholin, another protein expressed from the same promoter that expresses holin, binds to holin to decrease the effective burst size \cite{David_holinantiholin_2000,Abhi_ltv_2014}.
 
\par In this paper, there is an underlying assumption of protein being stable. In future work, we plan to use a gene expression model with protein degradation, and carry out a similar analysis. This can be further extended to more generalized gene expression models wherein the promoter can also switch between \textit{on} and \textit{off} states \cite{ShahrezaeVahid_ad_2008, abhi_transcriptionalburstingHIV_2010}.

\section*{Acknowledgment}
AS is supported by the National Science Foundation Grant DMS-1312926, University of Delaware Research Foundation (UDRF) and Oak Ridge Associated Universities (ORAU).

\bibliography{references}
\bibliographystyle{IEEEtran}
\end{document}